\let\originalleft\left
\let\originalright\right
\renewcommand{\left}{\mathopen{}\mathclose\bgroup\originalleft}
\renewcommand{\right}{\aftergroup\egroup\originalright}
\newtheorem{theorem}{Theorem}
\newtheorem{lemma}{Lemma}
\begin{document}

\begin{CJK}{UTF8}{gbsn}
\title{Efficient preparation of Dicke states}
\author{Jeffery Yu}
\affiliation{Joint Center for Quantum Information and Computer Science, NIST/University of Maryland, College Park, Maryland 20742, USA}
\affiliation{Joint Quantum Institute, NIST/University of Maryland, College Park, Maryland 20742, USA}
\affiliation{Department of Physics, University of Maryland, College Park, Maryland 20742, USA}
\author{Sean R. Muleady}\thanks{These two authors contributed equally.}
\affiliation{Joint Center for Quantum Information and Computer Science, NIST/University of Maryland, College Park, Maryland 20742, USA}
\affiliation{Joint Quantum Institute, NIST/University of Maryland, College Park, Maryland 20742, USA}
\author{Yu-Xin Wang (王语馨)}\thanks{These two authors contributed equally.}
\affiliation{Joint Center for Quantum Information and Computer Science, NIST/University of Maryland, College Park, Maryland 20742, USA}
\author{Nathan~Schine}
\affiliation{Joint Quantum Institute, NIST/University of Maryland, College Park, Maryland 20742, USA}
\affiliation{Department of Physics, University of Maryland, College Park, Maryland 20742, USA}
\author{Alexey V.\ Gorshkov}
\affiliation{Joint Center for Quantum Information and Computer Science, NIST/University of Maryland, College Park, Maryland 20742, USA}
\affiliation{Joint Quantum Institute, NIST/University of Maryland, College Park, Maryland 20742, USA}
\author{Andrew M.\ Childs}
\affiliation{Joint Center for Quantum Information and Computer Science, NIST/University of Maryland, College Park, Maryland 20742, USA}
\affiliation{Department of Computer Science and Institute for Advanced Computer Studies, University of Maryland, College Park, Maryland 20742, USA}

\begin{abstract}
We present an algorithm utilizing mid-circuit measurement and feedback that prepares Dicke states with polylogarithmically many ancillas and polylogarithmic depth. Our algorithm uses only global mid-circuit projective measurements and adaptively-chosen global rotations.
This improves over prior work that was only efficient for Dicke states of low weight, or was not efficient in both depth and width. Our algorithm can also naturally be implemented in a cavity QED context using polylogarithmic time, zero ancillas, and atom-photon coupling scaling with the square root of the system size.
\end{abstract}

\maketitle
\end{CJK}

Preparing entangled states is a central goal in quantum science. Dicke states are a key class of entangled states~\cite{Dicke_1954}. The weight-$w$ Dicke state for a system of $n$ spin-$\half$ particles is the symmetric equal-weight superposition of all configurations where $w$ particles are spin down and $n-w$ are spin up. Such states are total angular momentum eigenstates, and have featured prominently in the study of quantum optics~\cite{Dicke_1954} and quantum magnetism~\cite{anderson_approximate_1952}. More recently, in the context of quantum information science, Dicke states have emerged as a potential resource for quantum sensing~\cite{sorensen_entanglement_2001,Toth_2012,zhang_quantum_2014, Ouyang_2022} and quantum algorithms~\cite{Childs2002, farhi_quantum_2014, Hadfield_2019, Cook_2020}.
Beyond these applications, it is also of general interest to understand the capabilities and limitations of shallow adaptive quantum circuits, for which preparing such entangled states is a natural milestone.

Despite the theoretical simplicity of Dicke states, their preparation remains an outstanding challenge, particularly considering the limited coherence time and capabilities of current, noisy intermediate-scale quantum (NISQ) devices~\cite{Preskill_2018}. Protocols for preparing Dicke states have been widely studied both in the abstract circuit model and in various experimental setups; see for example Refs.~\cite{Childs2002, 2019deterministic, johnsson_geometric_2020, Wang_2021, Bartschi_2022, aktar_divide-and-conquer_2022-1,buhrman2023laqcc, piroli2024approximating, raveh_dicke_2024,raveh_q-analog_2024, nepomechie_spin-s_2024} and~\cite{kiesel_experimental_2007, wieczorek_experimental_2009,prevedel_experimental_2009,hume_preparation_2009,vanderbruggen_spin-squeezing_2011, bucker_twin-atom_2011,toyoda_generation_2011,chiuri_experimental_2012,noguchi_generation_2012,haas_entangled_2014,lucke_detecting_2014,opatrny_counterdiabatic_2016,zou_beating_2018,davis_painting_2018, carrasco_dicke_2023, chen_-chip_2023,bond_efficient_2023,saleem_achieving_2024}, respectively. Among existing circuits, the vast majority are only efficient for preparing low-weight Dicke states; when the desired weight scales with the system size, with $w = \frac{n}{2}$ being the hardest case, these circuits have depths scaling as $\Omega(n^{1/4})$. We are only aware of one prior work \cite{buhrman2023laqcc} that achieves a circuit depth of $O(\log n)$ for all weights $w$, but it uses $O(n^2 \log n)$ ancilla qubits \footnote{We are also aware of concurrent work by Liu, Childs, and Gottesman that also gives an algorithm for preparing Dicke states in $O(\log n)$ depth, using $O(n \log n)$ ancillas \cite{LCG24}. The approaches are very different: while we apply a simple sequence of collective rotations and collective $J_z$ measurements, the other approach uses sorting networks. Our approach is simpler, likely performs better in practice, and can perform better when the Hamming weight is lower, while the other approach solves a more general symmetrization problem.}.

Here, we present a simple algorithm whose circuit implementation is polylogarithmic in both depth and number of ancillas. Starting from the all spin-up state, we perform the same rotation on each spin, followed by measuring the collective magnetization, which projects onto a Dicke state. If we measure the desired weight $w$, we are done. If we measure some other $w'$, we rotate again by some angle conditioned on $w'$, perform another measurement, and repeat until we obtain $w$.
The main technical contributions of this work are the choice of the rotation angles and the analysis of the expected number of iterations to reach $w$.

The choice of rotation angles is motivated by a geometric phase-space representation of the Husimi-Q distribution as rings on a collective Bloch sphere. In this model, collective rotations correspond to rotating the ring on the sphere, and measurements correspond to projecting the rotated ring onto a Dicke ring, as shown in Figure \ref{fig:geometric-procedure}. We choose the rotation angle so that the resulting ring has maximal overlap with the ring for the desired Dicke state.

\begin{figure*}[t]
\centering
\includegraphics[width=0.9\textwidth]{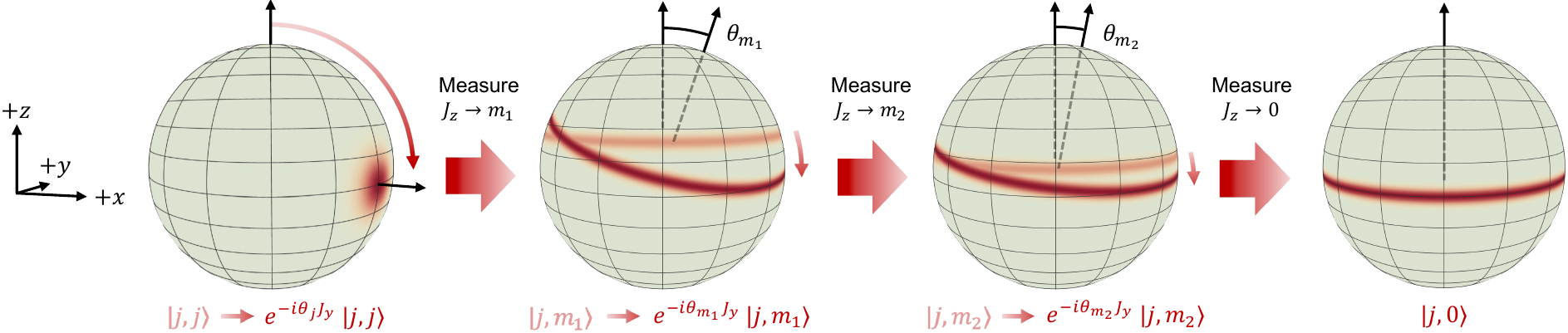}
\caption{A geometric representation of Algorithm~\ref{alg} using the Husimi-Q distribution for the Dicke states, represented as rings on the collective Bloch sphere. For each iteration $i$, the algorithm rotates the current state $|j,m_i\rangle$ by angle $\theta_{m_i}$ about $+y$, so that the corresponding ring is tangent to the ring of the target Dicke state, maximizing their overlap. For a target $m_t = 0$, the corresponding ring lies at the Bloch sphere equator. We project with a $J_z$ measurement, and repeat until we measure the desired state.}
\label{fig:geometric-procedure}
\end{figure*}

A central component of our protocol is the ability to perform mid-circuit collective $J_z$ measurements. Notably, this differs from measuring $\sigma_z$ for each particle individually. In our analysis of circuit complexity, we use the fact that a collective $J_z$ measurement can be implemented in logarithmic depth \cite{piroli2024approximating, zi2024shallow, 2024loghamming}.
On the other hand, going beyond the circuit model, prior work in cavity quantum electrodynamics has explored how such measurements may be heralded by cavity photons or be used in continuous-measurement-and-feedback schemes to generate complex many-body entangled states~\cite{leroux_implementation_2010,schleier-smith_squeezing_2010,christensen_quantum_2014,chen_carving_2015,cox_deterministic_2016,davis_painting_2018,deist_mid-circuit_2022-1,ramette_counter-factual_2024,hartung_quantum-network_2024,grinkemeyer_error-detected_2024}. Using these ideas, we also discuss a constant-time implementation of this measurement in cavity systems. The ability to efficiently prepare Dicke states using only global operations opens the door to harnessing these states for sensing tasks, where they offer a quantum-enhanced precision that can achieve Heisenberg-limited scaling, while offering robustness compared to other entangled resources~\cite{saleem_achieving_2024,lin_covariant_2024}.

\textit{Preliminaries.---}Dicke states have a convenient representation in terms of angular momentum eigenstates, where the $n$ qubits are viewed as spin-$\half$ particles. Let $\mathbf{S}_k$ be the angular momentum operator for the $k$th qubit, and let
\(
\mathbf{J} = \sum_{k=1}^n \mathbf{S}_k
\)
be the total angular momentum operator.
The Dicke states $\ket{j,m}$ are simultaneous eigenstates of $\mathbf{J}^2$ and $J_z$
with quantum numbers $j = \frac{n}{2}$
and $m \in \{-j, -j+1, \dots, j\}$, respectively.
The Dicke state $\ket{j,m}$ is simply the uniform superposition of strings of $n=2j$ bits with Hamming weight $w=j-m$.

We assume the following primitives for our model of computation: (1) prepare $\ket{0}^{\otimes n}$ states on demand, (2) perform collective rotations (uniform single-qubit gates) about the $y$-axis, and (3) perform $J_z$ measurements. Note that both collective rotations, expressed via the unitary $e^{-i\theta J_y}$, and $J_z$ measurements preserve permutation symmetry, leaving the quantum number $j = \frac{n}{2}$ fixed.

Below, we present an experimental setup where a collective $J_z$ measurement can be implemented in $O(1)$ time, independent of $n$.
Even without access to collective measurements, we can implement a $J_z$ measurement in the ordinary circuit model, where such a measurement is equivalent to a projective Hamming weight measurement, with only $\log(n)$ overhead.
One approach is to set $\floor{\log(n)} + 1$ ancillas to be the measurement register and accumulate the Hamming weight into those qubits. Reference~\cite{zi2024shallow} gives an implementation of this in $O(\log^2 n)$ depth with no additional ancillas.

\begin{figure*}
\centering
\includegraphics[width=.97\textwidth]{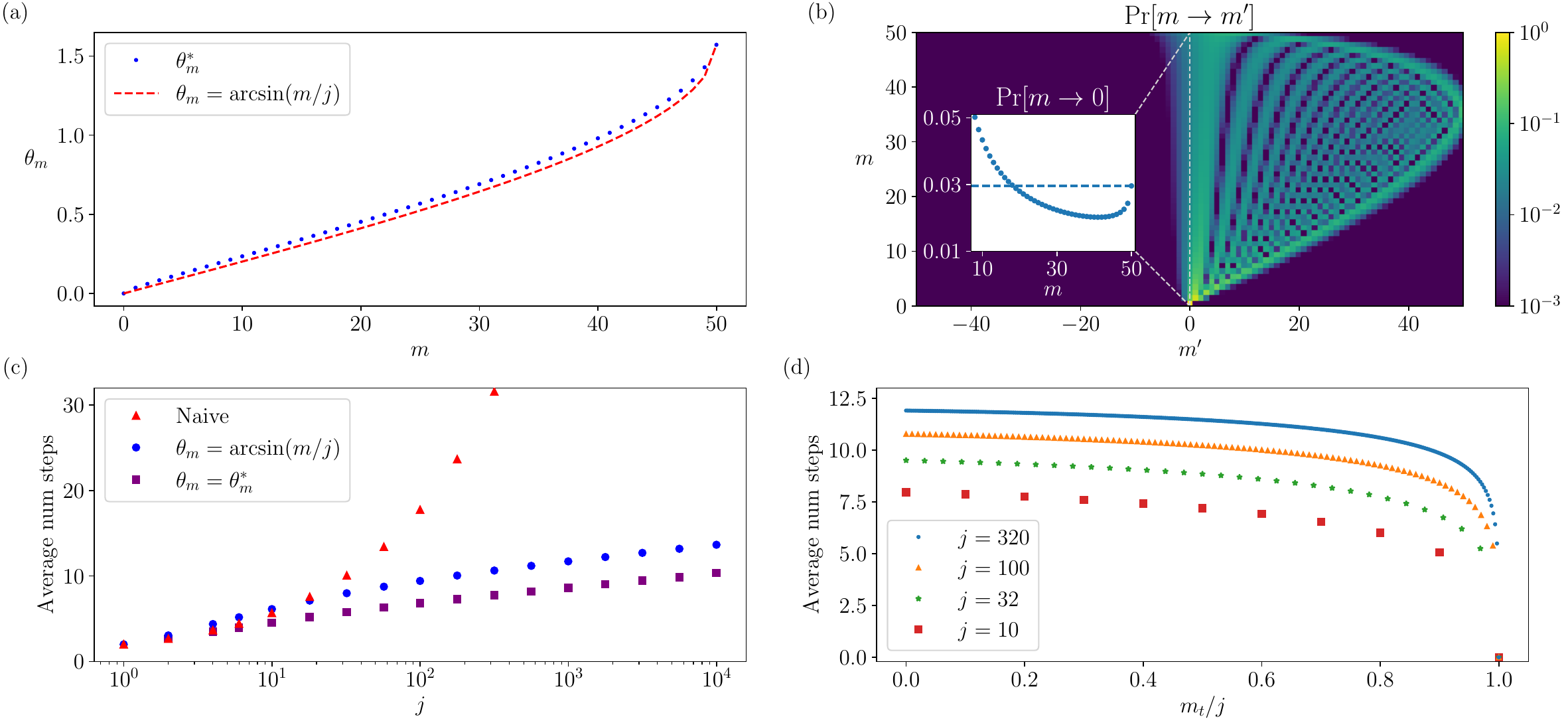}
\caption{(a) For $j=50$ and $m_t=0$, comparison of the numerically computed $\theta_m^*$ and the chosen $\theta_m = \arcsin\bigl(\smash{\frac{m}{j}}\bigr)$. (b) For the same parameters, transition probability matrix $\Pr[m \to m']$ of the base algorithm without resets, which is symmetric about the origin, i.e., $\Pr[-m \to -m'] = \Pr[m \to m']$. Inset shows the transition probabilities for the $m'=0$ slice.
(c) Expected running time for preparing the $m_t=0$ state with various algorithms. Our algorithm exhibits similar logarithmic scaling using the geometric angles (blue circles) as with the numerically optimal angles (purple squares), both of which are exponentially faster than the polynomial running time with the naive approach of resetting at every step (red triangles).
(d) Expected runtime of our algorithm using the geometric angle for any target state. In all cases, the $m_t=0$ state takes the longest.}
\label{fig:main}

\end{figure*}

\textit{Algorithm.---}Our goal is to prepare the Dicke state $\ket{j,m_t}$ for a desired target value of $m_t$, starting from the initial product state $\ket{j,j} = \ket{0}^{\otimes n}$. 
The basic algorithm is to perform a uniform rotation $e^{-i\theta J_y}$ for some angle $\theta$ and measure $J_z$. If $m= m_t$ is measured, we are done; otherwise, we iterate this procedure, choosing subsequent rotation angles $\theta$ based on the prior outcome of the measurement of $J_z$.

A natural strategy for choosing the rotation angles $\theta$ is to maximize the overlap of the current state with the target Dicke state on each iteration of the algorithm. Since both the initial state and the state following each collective $J_z$ measurement are of the form $\ket{j,m}$, the task is to choose $\theta = \theta_{m_t,m}$ to maximize the quantity
\begin{equation}
\label{eq:transition-prob}
    \abs{d^j_{m_t, m}(\theta_{m_t,m})}^2 = \abs{\braket{j,m_t|e^{-i\theta_{m_t,m}J_y}|j,m}}^2,
\end{equation}
where $d^j_{m_t, m}(\theta)$ are known as elements of the Wigner $d$-matrix. While the functional form of these matrices is known, it is practically difficult to optimize for arbitrary $j$, $m$, and $m_t$ without resorting to numerical methods. However, we identify a relatively simple analytic choice of angle that is nearly optimal---in the sense that we numerically observe only a constant overhead in time compared to the optimal angle---and for which we provide a rigorous analysis of the running time.

To find a suitable set of angles $\theta_{m_t,m}$ and to understand the properties of our algorithm, we find it useful to first visualize the Dicke states in terms of their phase space distributions. In particular, consider the Husimi-Q distribution $Q(\Omega) = \abs{\braket{\psi|\Omega}}^2/\pi$ for a state $\ket{\psi}$, where $|\Omega\rangle$ is a coherent spin state oriented along an axis $\mathbf{n}$ with polar and azimuthal angles $\Omega = (\theta,\phi)$~\cite{arecchi_atomic_1972}. In terms of the Dicke states, these are defined via
\(
\ket{\theta,\phi} = e^{-i\phi J_z}e^{-i\theta J_y}\ket{j,j}
\).
We can thus geometrically map the Q distribution for a state onto the surface of a collective Bloch sphere with radius $\sqrt{j(j+1)}$. The Q distribution for the Dicke state $\ket{\psi}=\ket{j,m}$ is
\(    Q_m(\Omega) = \frac{1}{\pi} \binom{2j}{j+m} \cos^{2(j+m)}\tfrac{\theta}{2}\sin^{2(j-m)}\tfrac{\theta}{2}.
\)
On the collective Bloch sphere surface, as $j\rightarrow \infty$, these correspond to narrow horizontal rings of radius $r_m = \sqrt{j(j+1) - m^2}$ located at a height $m$ above the equator for $m\neq \abs{j}$; for $m = \pm j$, this instead corresponds to a narrow Gaussian distribution at either pole~\cite{agarwal_relation_1981,dowling_wigner_1994}. For a uniform rotation of $\ket{j,m}$ via $e^{-i\theta J_y}$, the Q distribution undergoes the analogous rotation on the collective Bloch sphere. This results in a ``tilted ring'' distribution, where the normal vector to the plane of the ring forms an angle $\theta$ to the z-axis.

Within this geometric picture, a reasonable choice of angle $\theta_{m_t,m}$ is one that maximizes the overlap of the Q distributions of the rotated state and the target Dicke state in the limit of large $j$. This maximum occurs when the corresponding ring distributions intersect at a point sharing the same tangent vector, as shown in Fig~\ref{fig:geometric-procedure}. 
This condition is met for rotation angle
\begin{align}
\label{eq:angle}
    \theta_{m_t,m} = \arcsin{
    [(mr_{m_t} - m_tr_m)/r_0^2]
    }.
\end{align}

In the End Matter, we argue that our algorithm generically prepares arbitrary target Dicke states in time $O(\log (j-m_t))$. In particular, we predict that the running time decreases with increasing $m_t$, which is consistent with numerical calculations discussed below.

For the remainder of this paper, we focus on the case $m_t = 0$, corresponding to the Dicke state with maximal interspin entanglement. Then Eq.~\eqref{eq:angle} reduces to $\theta_{0,m} = \arcsin\parens[\big]{m/\sqrt{j(j+1)}}$. For simplicity and with negligible impact, we approximate this as $\theta_m \coloneq \arcsin(m/j)$. We see in Fig.~\ref{fig:main}(a) that our choice of $\theta_m$ is numerically close to the optimal $\theta_m^*$ that maximizes Eq.~\eqref{eq:transition-prob}.

As shown in Fig.~\ref{fig:main}(b) and its inset, there are some $m$ for which $\abs*{d^j_{0,m}(\theta_m)}^2 < \abs*{d^j_{0,j}(\theta_j)}^2$, i.e., we have a lower probability of reaching the $m=0$ state with the optimal rotation than if we start over with $m=j$ and rotate by $\frac{\pi}{2}$. In these cases, we choose to reset all qubits to $\ket{0}$ and restart from $m=j$. Empirically, we observe this to hold for $\abs{m} \gtrsim j^{3/4}$. Though there is negligible difference in the numerical runtime, we include this reset whenever $\abs{m} > j^{1/2}$ for ease of the formal proof in the Supplement. The final procedure is Algorithm \ref{alg}.

\begin{algorithm}[H]
\caption{Preparation of the $m_t = 0$ Dicke state using global rotations and $J_z$ measurements}
\label{alg}
\begin{algorithmic}[1]
\State Initialize $2j$ qubits each to $\ket{0}$, $m = j$
\While {$m \ne 0$}
	\State Rotate by $\exp\parens{-i\theta_{m} J_y}$
	\State $m \gets$ measure ${J_z}$
	\If {$\abs{m} > \sqrt{j}$}
		\State Reset all qubits to $\ket{0}$, $m = j$
	\EndIf
\EndWhile
\end{algorithmic}
\end{algorithm}

\textit{Runtime analysis.---}In this section we sketch the proof of the main result. The full calculations are provided in the Supplemental Material. We consider each iteration of the \textbf{while} loop to take unit time. 

\begin{theorem}
\label{thm:main}
Algorithm \ref{alg} prepares the Dicke state $\ket{j, 0}$ in expected $O(\log j)$ time.
\end{theorem}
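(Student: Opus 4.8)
The plan is to recast Algorithm~\ref{alg} as a Markov chain on the single coordinate $m$ and to bound the expected absorption time at $m=0$ by a drift (Lyapunov) argument. Writing $\langle\,\cdot\,\rangle_m$ for the expectation over one iteration of the loop started from $\ket{j,m}$, the one-step law is $\Pr[m\to m'] = \abs{d^j_{m',m}(\theta_m)}^2$, composed with the reset map that returns any $\abs{m'}>\sqrt j$ to $j$. I would first record the exact first two moments of $J_z$ after the rotation $e^{-i\theta_m J_y}$, namely $\langle m'\rangle_m = m\cos\theta_m$ and $\mathrm{Var}(m') = \tfrac12\sin^2\theta_m\,(j(j+1)-m^2)$, which with $\theta_m=\arcsin(m/j)$ show that the rotated state is supported on $m'\in[0,2m\cos\theta_m]$ up to its intrinsic width. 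Geometrically this is exactly the tangent-ring condition of Eq.~\eqref{eq:angle}: the tilted ring touches the equator at $m'=0$ and reaches $m'\approx 2m\cos\theta_m$ on the far side.

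The heart of the argument is to establish a constant multiplicative contraction of $\abs m$ per step. In the semiclassical ($j\to\infty$) limit, the projection of a uniform tilted ring onto the $z$-axis is an arcsine law on $[0,2m\cos\theta_m]$; parameterizing $m'=2m\cos\theta_m\cos^2(\phi/2)$ with $\phi$ uniform gives $\langle \log m'\rangle_m = \log m + \log\cos\theta_m - \log 2$, using $\tfrac1\pi\int_0^\pi\log\abs{\cos\psi}\,d\psi = -\log 2$. Hence I would aim to prove the drift bound
\begin{equation}
\label{eq:drift}
\langle \log(1+\abs{m'})\rangle_m \le \log(1+\abs{m}) - c
\end{equation}
for an absolute constant $c>0$, valid throughout the regime $C_0\le \abs m\le\sqrt j$ in which the reset never fires (note $\cos\theta_m = 1-O(1/j)$ there, so the $\log\cos\theta_m$ term is negligible). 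Passing to $\log\abs m$ rather than $\abs m$ itself is essential: the expected value of $\abs{m'}$ barely decreases for small $m$, so applying Jensen's inequality to Eq.~\eqref{eq:drift} would be hopelessly lossy; the genuine contraction comes from the arcsine pile-up of probability near $m'=0$, which no moment bound alone can capture.

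Granting Eq.~\eqref{eq:drift}, the expected running time follows from standard Foster--Lyapunov bookkeeping with $V(m) = \tfrac1c\log(1+\abs m)$: the quantity $V(m_t)+t$ is a supermartingale until $\abs m$ enters a finite neighborhood of $0$, so optional stopping yields $O(V(j)) = O(\log j)$ expected steps to drive $\abs m$ down to $O(1)$. For the endgame I would show that from any $\abs m = O(1)$ the tangency at the equator forces $\Pr[m'=0]=\Omega(1)$ (the arcsine density at the tangent point, equivalently the matrix element $\abs{d^j_{0,m}(\theta_m)}^2$, is bounded below), so $m=0$ is reached in $O(1)$ further expected steps. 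Finally I would dispatch the restart: after a reset, one rotation by $\theta_j=\pi/2$ produces a state with $\langle m'\rangle = 0$ and $\mathrm{Var}(m')=j/2$, so Chebyshev gives $\Pr[\abs{m'}\le\sqrt j]\ge\tfrac12$ and the chain re-enters the drift regime after $O(1)$ expected resets.

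I expect the main obstacle to be a rigorous, uniform justification of the drift bound Eq.~\eqref{eq:drift} directly from the Wigner-$d$ matrix elements $\abs{d^j_{m',m}(\theta_m)}^2$, rather than from the heuristic ring picture. This demands control of the semiclassical limit uniformly in $m$ over $C_0\le \abs m\le\sqrt j$, in particular (i) a lower bound on the probability that $m'$ is small (the integrable $1/\sqrt{m'}$ divergence of the arcsine density near the tangent point), which supplies the contraction, and (ii) an upper-tail estimate near $m'\approx 2m\cos\theta_m$ guaranteeing that $\langle\log(1+\abs{m'})\rangle_m$ is not dominated by the far side of the ring. A secondary technical point is the bookkeeping for resets occurring inside the drift phase, which can happen only in the narrow band $\sqrt j/2<\abs m\le\sqrt j$; there one must verify that the $\Omega(1)$ probability of stepping downward out of the band, together with the $O(1)$ cost of recovering from a reset, keeps the total contribution within the $O(\log j)$ budget.
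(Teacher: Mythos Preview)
Your plan is essentially the paper's own strategy: a Markov-chain/Lyapunov argument showing geometric contraction toward $m=0$, resting on the arcsine-law structure of the transition kernel $|d^j_{m',m}(\theta_m)|^2$ in the semiclassical regime, with the $m=j$ restart and the $|m|=O(1)$ endgame handled separately. Two small differences are worth flagging. First, you choose the Lyapunov function $V(m)=\log(1+|m|)$ with an additive drift of $-\log 2$; the paper instead uses $|m|^\alpha$ for a small fixed $\alpha\in(0,1)$ and shows the multiplicative contraction $\sum_{m'}|d^j_{m',m}(\theta_m)|^2\,(m'/m)^\alpha = \tfrac{2^\alpha}{\pi}B(\alpha+\tfrac12,\tfrac12)<1$ directly from the arcsine integral. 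These are equivalent in spirit (your computation is precisely the $\alpha\to0^+$ derivative of theirs), but the power-law choice avoids having to regularise $\log 0$ and makes the supermartingale bookkeeping marginally cleaner. Second, the paper sidesteps your ``narrow band'' reset accounting by introducing the proxy $M=\min(|m|,\sqrt{j}+1)$: since a reset never increases $M$, the contraction $\langle M^\alpha\rangle_{t+1}\le c\,\langle M^\alpha\rangle_t$ holds unconditionally, and one need not track excursions in the band $\sqrt{j}/2<|m|\le\sqrt{j}$ separately. You have correctly identified the main technical obstacle---uniform control of $|d^j_{m',m}(\theta_m)|^2$ against the arcsine envelope over $1\ll m\le\sqrt{j}$---and the paper supplies exactly this via a stationary-phase expansion of the integral representation of the Wigner $d$-matrix (their Lemma~2), which gives the envelope together with an explicit $O(\max\{m^2/j^2,\,1/(mj)\})$ error term sufficient for both the lower tail near $m'=0$ and the upper tail near $m'=2m$.
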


At a high level, our strategy is to show that $\ev{\abs{m(t)}^\alpha}$, the expectation of $\abs{m}^\alpha$ at time $t$ for some constant $\alpha >  0$, decays to $0$ exponentially in $t$. This means that, given any $\eps > 0$, we can achieve $\ev{\abs{m(t)}^\alpha} < \eps$ within logarithmic time. Since the values of $m$ are discretized, if $\ev{\abs{m(t)}^\alpha} < \eps$, then the probability of being in the $m=0$ state after $t$ steps is $\Pr[m(t) = 0] > 1-\eps$, as shown in the supplement.

First, we show that starting from the initial $m=j$ state, we can obtain $\abs{m} \le \sqrt{j}$ in expected $O(1)$ time. This follows since the measurement outcomes are binomially distributed around $0$, and we obtain a measurement within a standard deviation with constant probability.

Next, we use the stationary phase approximation to show that, for $1 \ll m \le \sqrt{j}$, we have the following asymptotic expansion of the Wigner $d$-matrix element:
\begin{align}
d^j_{m',m} & (\theta_m) = \sqrt{\frac{2}{\pi m}} \bracks{1 - (1-x)^2}^{-1/4} \times \nonumber\\
& \cos\bracks{m(1\!-\!x) \arccos(1\!-\!x) - m \sqrt{1\!-\!(1\!-\!x)^2} + \tfrac{\pi}{4}} \nonumber\\
&+ O(\max\{{m}^2 /j^{2},  1/{m} j \})
\label{eq:d-asymp}
\end{align}
for $0 < m' < 2m$ where $x = m'/m$, and the matrix element is negligible for other $m'$.

We do not have asymptotic expressions for the Wigner $d$-matrix elements for $m > \sqrt{j}$, but these are not needed as we simply reset if we measure $m > \sqrt{j}$. However, resetting to $m=j$ drastically increases the expectation of $\abs{m}$ at that time step. Nevertheless, at the next time step we expect to recover a state with $m \le \sqrt{j}$ with constant probability. Thus, we introduce a proxy variable $M \coloneq \min(\abs{m}, \sqrt{j}+1)$. This still has the desirable property that if $\ev{M(t)^\alpha} < \eps$, then the probability of reaching $m=0$ after $t$ steps is at least $1 - \eps$.

Using the asymptotic expansion in Eq.~\eqref{eq:d-asymp}, we show that there exists a constant $c < 1$ such that, for every $m$, we have
\(
\sum_{m'=0}^{2m} \abs*{d^j_{m',m}(\theta_m)}^2 \frac{M^\alpha}{M'^\alpha} < c,
\)
where $M'$ is defined as a proxy for $m'$ analogously to $M$.
This implies that $\frac{\ev{M(t+1)^\alpha}}{\ev{M(t)^\alpha}} < c$ for each $t$, so $\ev{M(t)^\alpha} < c^t \ev{M(0)^\alpha} = c^t (\sqrt{j}+1)^\alpha$.
Therefore, for any desired $\eps > 0$, we can attain $\ev{M(t)^\alpha} < \eps$ with
\begin{equation}
t = \frac{\alpha\log(\sqrt{j}+1) + \log(1/\eps)}{\log(1/c)} = O(\log j)
\end{equation}
steps, as claimed.

\textit{Numerics.---}Our algorithm can be understood as a discrete-time Markov chain with $2j+1$ states corresponding to $m \in \{-j, -j+1, \dots, j\}$. The transition probabilities
\(
\Pr[m \to m'] = \abs*{d_{m',m}^j(\theta_m)}^2
\)
can be arranged into a stochastic matrix $P$, where $P_{ab}$ is the probability of transitioning from the $a$th to the $b$th state. A visualization of $P$ is shown in Fig.~\ref{fig:main}(b).

This is an absorbing Markov chain with the single absorbing state $m = 0$. The average number of steps before absorption can be calculated directly from $P$~\cite{kemeny1960finite}. Figure~\ref{fig:main}(c) numerically compares the performance of Algorithm \ref{alg} with variations in the choice of angles. In particular, our geometrically motivated angles $\theta_m = \arcsin\parens{ {m}/{j}}$ perform slightly worse than the optimal angles $\theta_m^*$, but exhibit the same logarithmic scaling for the expected number of steps, with a relatively small constant prefactor.

Finally, in Fig.~\ref{fig:main}(d), we examine the preparation of Dicke states with arbitrary $m_t$, utilizing our choice of angles $\theta_{m,m_t}$ in Eq.~\eqref{eq:angle}. For various fixed $j$, we observe that this choice of angle results in an average number of steps strictly less than that required for the $m_t=0$ case. We argue in the End Matter that this behavior is expected, and that the corresponding average number of steps scales as $O(\log(j-m_t))$.

\begin{figure}[t]
\centering
\includegraphics[width = 0.49\textwidth]{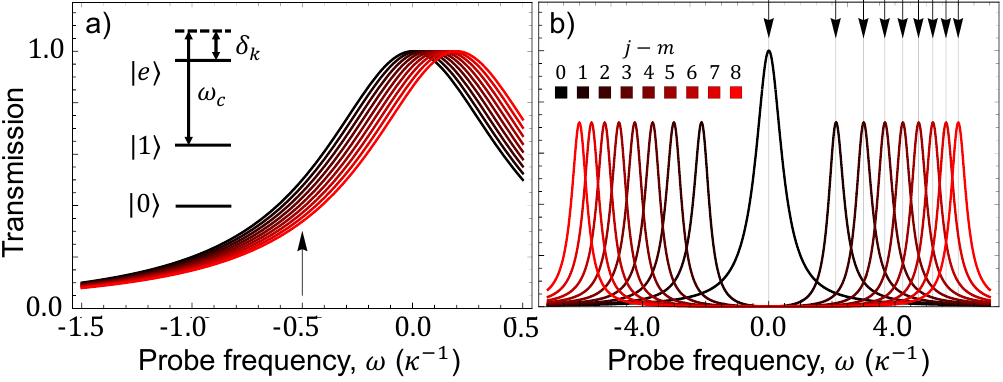}
\caption{Two schemes to experimentally implement Hamming weight measurements, illustrated for $n=8$ qubits. (a) Probe on the side-of-fringe in the dispersive cavity regime, with photon counting in transmission to determine cavity shift magnitude (inset: level diagram for our detection scheme). (b) Probe simultaneously at each possible resonance frequency in the resonant cavity regime, with heterodyne transmission detection to determine which frequency tone transmits.}
\label{fig:measurement}
\end{figure}

\textit{Measurement methods.---}Collective Hamming weight measurements may be directly implemented on an ensemble of $n$ atomic qubits in which one of the two qubit states is coupled to a single-mode cavity. We illustrate this by considering three-level atoms as depicted in Fig.~\ref{fig:measurement}(a), with states $\{|0\rangle, |1\rangle, |e\rangle\}$, where $|0\rangle$ and $|1\rangle$ are the computational subspace and the cavity dispersively couples $|1\rangle$ to $|e\rangle$, i.e. with a large detuning between the cavity frequency and the atomic transition frequency between $|1\rangle$ and $|e\rangle$. In the rotating frame of a bare cavity photon (at lab frame frequency $\omega_c$), the atom-cavity interaction Hamiltonian is
\begin{equation}
H = a^\dagger a \sum_{k=1}^n \frac{g^2}{\delta_k} \mathcal{P}_k
\end{equation}
where $a^\dagger$ ($a$) creates (annihilates) a cavity photon, $2g$ is the single-photon Rabi frequency, $\delta_k$ is detuning of the $k$th atom from the cavity, and $\mathcal{P}_k = \half (I - 2(\mathbf{S}_k)_z)$. For our application, we set all detunings to be equal, i.e.~$\frac{g^2}{\delta_k} = \chi$ for all $k$ for some constant $\chi$. Thus, the total cavity shift is $\Delta_a = \chi \mathcal{P}$, where $\mathcal{P} = \sum_k \mathcal{P}_k$, and the lab-frame cavity transmission spectrum is
\(
T(\omega) = \frac{\kappa^2}{(\omega - \omega_c - \Delta_a)^2 + \kappa^2},
\)
where $\kappa$ is the cavity linewidth and $\omega$ is the angular frequency of the cavity probe.

Assuming that the maximum total cavity shift is sufficiently small, i.e., $\chi n\ll\kappa$, we can probe on the side of the transmission peak, as depicted in Fig.~\ref{fig:measurement}(a), taking $\omega - \omega_c = \kappa$, to yield a Fisher information of
\(
\mathcal{I}(\Delta_a) = \frac{4\kappa^2}{(2\kappa^2 - 2\kappa\Delta_a + \Delta_a^2)^2}
\)
for a single photon.
The Cramer-Rao bound then gives
\(
\Var(\widetilde{\Delta}_a) \ge \frac{1}{\mathcal{I}(\Delta_a)} = \parens{\kappa - \Delta_a + \frac{\Delta_a^2}{2\kappa}}^2 \sim \kappa^2
\label{eq:single-fisher-variance}
\)
for any unbiased estimator $\widetilde{\Delta}_a$ of $\Delta_a$. 
Averaging over $N$ photons and taking $\mathcal{P} = \frac{\Delta_a}{\chi}$ gives
\(
\Var(\widetilde{P}) \sim \frac{1}{N} \parens{\frac{\kappa}{\chi}}^2.
\)
We ensure that this variance is $O(1)$ by taking $N \sim \parens{\frac{\kappa}{\chi}}^2$ photons.

This scheme is straightforward, and the assumption of small total cavity shift is easy to satisfy experimentally. However, the number of photons required is quite large because the differential signal between possible Hamming weight measurements is, by assumption, small.

It is more experimentally advantageous to maximize the differential signal regardless of atom number, maximizing the resolvability of neighboring Hamming weights. Therefore we may place the cavity on resonance with the $|1\rangle \leftrightarrow |e\rangle$ transition, which yields the bare cavity transmission spectrum if all atoms are in $|0\rangle$ and vacuum Rabi spectra with splitting $2g\sqrt{n_1}$ for Hamming weights $n_1$. The task of Hamming weight measurement is then the task of determining which of the possible spectra is realized by the cavity.

To this end, we propose a multichromatic probe laser with a spectral peak at each of the possible vacuum Rabi resonances, as depicted in Fig.~\ref{fig:measurement}(b), and power in each peak chosen to yield equal \textit{transmitted} photon number. In the fully resolved limit, this laser probes each Hamming weight possibility in a time and with a number of transmitted photons which is independent of the total atom number. The frequency of the cavity transmission signal, which carries the desired Hamming weight information, may be revealed using standard optical heterodyne techniques.

The principal cost of this scheme is the requirement that the atom-cavity coupling $g$ is $\Omega(\sqrt{n})$, to be able to resolve neighboring peaks that are $g \sqrt{n} - g \sqrt{n-1} \sim g/\sqrt{n}$ apart. Additionally, while not a fundamental limitation, the number of tones and bandwidth to produce the multichromatic probe laser and perform optical heterodyne measurement scale linearly with $n$.

\textit{Discussion.---}In this paper, we have shown an algorithm for preparing Dicke states with depth and width logarithmic in the number of qubits. The algorithm is compatible with existing experimental platforms, using only sequences of global single-qubit rotations and collective Hamming weight measurements. We have proposed an experimental framework in which the collective measurements can be performed in constant depth, leading to a $\log$-depth circuit. Even in the absence of collective measurements, using existing Hamming weight protocols gives a circuit for preparing Dicke states in polylog depth, outperforming several recent works in the regime where the desired weight is linear in the number of qubits. This is also the first such algorithm that only uses logarithmically many ancilla qubits.

Our results illustrate the power of utilizing a phase space approach based on the Q distribution, which affords an intuitive geometric understanding of our algorithm and, crucially, provides an effective choice of rotation angles for our algorithm. These nearly optimal parameters alleviate the need for any numerical optimization in our approach, and also enable an analytic study of the asymptotic properties of our algorithm.

The cavity system used to implement our collective measurements also holds potential for further applications.
The small number of expected iterations makes this suitable for near-term implementation. Even if the qubits decohere, it is inexpensive to simply reset the experiment and retry.

In this work, we were only concerned with Hamming weight measurements, which arise from setting all detunings to be equal. A natural extension is to relax the assumption that all atoms couple to the cavity equally, allowing for a richer class of measurements.
For example, we may obtain a superposition of Dicke states $\ket{j, m} + \ket{j, -m}$ (which has metrological applications~\cite{lin_covariant_2024}) by probing at the midpoint frequency.

\begin{acknowledgements}

\textit{Acknowledgements.---}We thank Jacob Lin for helpful discussions. J.Y., A.M.C., and A.V.G.~were supported in part by the DoE ASCR Quantum Testbed Pathfinder program (awards No.~DE-SC0019040 and No.~DE-SC0024220), DoE ASCR Accelerated Research in Quantum Computing program (awards No.~DE-SC0020312 and No.~DE-SC0025341), and NSF QLCI (award No.~OMA-2120757). J.Y.~and A.V.G.~were also supported in part by DARPA SAVaNT ADVENT, NSF STAQ program, and AFOSR MURI. J.Y.,~and A.V.G.~also acknowledge support from the U.S.~Department of Energy, Office of Science, National Quantum Information Science Research Centers, Quantum Systems Accelerator. S.R.M. is supported by the NSF QLCI grant OMA-2120757. Y.-X.W.~acknowledges support from a QuICS Hartree Postdoctoral Fellowship. N.S.~acknowledges support, in part, from ARO (W911NF-24-1-0064) and ARL (W911NF-24-2-0107).

\end{acknowledgements}

\bibliography{refs}

\newpage
\onecolumngrid

\appendix
\newpage

\section{End Matter---Geometric argument}\label{app:geometric}

In this End Matter, we give the details of a geometric argument that Algorithm \ref{alg:general} reaches the $\ket{j, m_t}$ Dicke state in $O(\log(j-m_t))$ iterations. We let $m_t \ge 0$ without loss of generality as the $m_t \le 0$ case is symmetric via a global flip. For simplicity, we do not include a reset condition analogous to that in Algorithm~\ref{alg} as it is not needed for our argument.

\begin{algorithm}[H]
\caption{Preparation of arbitrary $\ket{j, m_t}$ Dicke state using global rotations and $J_z$ measurements}
\label{alg:general}
\begin{algorithmic}[1]
\State Initialize $2j$ qubits each to $\ket{0}$, $m = j$
\While {$m \ne m_t$}
	\State Rotate by $\exp\parens{-i\theta_{m_t, m} J_y}$, where $\theta_{m_t,m} = \arcsin{
    [(mr_{m_t} - m_tr_m)/r_0^2]
    }$
	\State $m \gets$ measure ${J_z}$
\EndWhile
\end{algorithmic}
\end{algorithm}

We describe the expected behavior within the geometric picture, whose relevant quantities are shown in Fig. \ref{fig:geometry}(a). The strategy is similar to the proof of Theorem \ref{thm:main}. First, we use the geometric model to derive a coarse-grained expression for the transition probabilities in the large-$j$ regime. Within this model, we compute the transition probabilities $\Pr[m \to m']$ as the overlap of the Q distribution ring of $\ket{j, m'}$ with the rotated ring from $\ket{j, m}$. Then we show that, for some constant $\alpha > 0$, the expected deviation $\ev{(m'-m_t)^\alpha}$ decays exponentially in time. The conclusion then follows in the same manner.

\begin{figure}[b]
\includegraphics[width=0.6\textwidth]{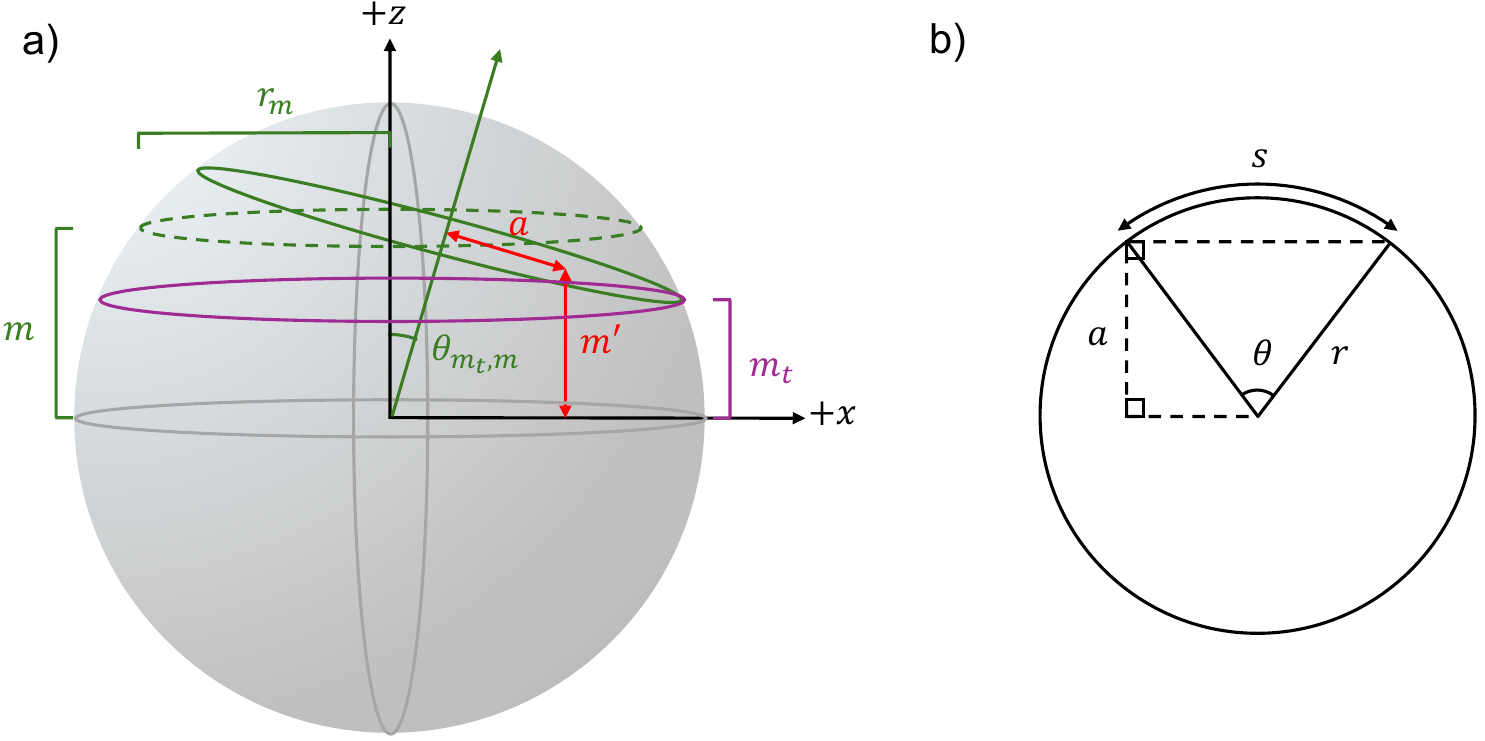}
\caption{(a) Geometry of the (rotated) Dicke states, idealized as rings on the collective Bloch sphere, corresponding to the large-$j$ limit of the Husimi-Q distribution. We show the ring corresponding to $|j,m\rangle$ (green) before (dotted) and after (solid) a rotation by an angle $\theta_{m_t,m}$. We also show the ring for the corresponding target state $|j,m_t\rangle$ (purple). Other variables $m'$ and $a$ used in our calculation of the probability distribution function are shown in red. (b) A two-dimensional cross section of a Dicke ring with the relevant parameters for calculating arc length.}
\label{fig:geometry}
\end{figure}

We begin by computing the arc length of a circular segment a distance $a$ from the center of a ring of radius $r$, as shown in Fig.~\ref{fig:geometry}(b). This is equivalently described by the arc length of the circular sector of angular width $\theta$, defined such that $\cos(\theta/2) = a/r$. The corresponding arc length is then $s = 2r \arccos(a/r)$, and we have the (unsigned) infinitesimal arc length
\begin{align}
    \dd{s} = \frac{2 \dd{a}}{\sqrt{1-(a/r)^2}}.
\end{align}
Now, consider a horizontal ring at height $m$ above the origin (with radius $r_m$), and rotated by an angle $\theta_{m_t,m}$ about $+y$ so that the bottom edge of the rotated ring lies at height $m_t$; $\theta_{m_t,m}$ is defined via Eq.~\eqref{eq:angle}. For the differential arc length of this rotated ring, we have the relation $a = (m'-m_t)/\sin\theta_{m_t,m} - r_m$, so
\begin{align}
    \dd{s} = \frac{2r_m \dd{m'}}{(m' - m_t) \sqrt{2r_m\sin\theta_{m_t,m}/(m'-m_t) - 1}}.
\end{align}
This expression is defined in the range $m_t < m' < m_t + 2r_m \sin \theta_{m_t, m}$.

In the large $j$ limit, where the Q-distribution of the state $e^{-i\theta_{m_t,m}J_y}|j,m\rangle$ is well represented by this tilted ring, we assume that the probability to obtain $J_z = m'$ is proportional to the arc length  lying between $m'$ and $m'+dm'$. Properly normalizing, we thus have the continuous probability distribution function (pdf)
\begin{align}
    p(m',m;m_t) = \frac{1}{\pi r_m\sin\theta_{m_t,m} \left\{
    \left[\left(m'-m_t\right)/\left(r_m\sin\theta_{m_t,m}\right)\right]\left[2 - \left(m'-m_t\right)/\left(r_m\sin\theta_{m_t,m}\right)\right]\right\}^{1/2}}
    \label{eq:pdf}
\end{align}
for $m_t < m' < m_t + 2r_m \sin \theta_{m_t, m}$, and assume zero probability for $m'$ outside this range. We note that while the integral of this expression converges, the pdf diverges at $m_t$ and $m_t + 2r_m$. This indicates that the overlap probability is maximal at these points, and we thus utilize the set of rotation angles $\theta_{m_t,m}$ for preparing target Dicke states $|j,m_t\rangle$.

This offers a coarse-grained way to predict moments of $J_z$. In particular, Eq.~\eqref{eq:pdf} takes the form of a beta distribution, with moments
\begin{align}
    \ev{(m'-m_t)^\alpha \mid m} &= \int_{m_t}^{m_t + 2r_m\sin\theta_{m_t,m}}(m'-m_t)^\alpha p(m',m;m_t) \dd{m'} \\
	&= \frac{B(\alpha + 1/2,1/2)}{\pi}\left(2r_m\sin\theta_{m_t,m}\right)^\alpha \label{eq:beta_geo}
\end{align}
for beta function $B(x,y) = \Gamma(x)\Gamma(y)/\Gamma(x+y)$ and gamma function $\Gamma(x)$.
Here, $\ev{(m'-m_t)^\alpha \mid m}$ is the expectation of $(m'-m_t)^\alpha$, conditional on the previous state being $\ket{j,m}$.

For $0<\alpha< 1$, we have $\ev{(m'-m_t)^\alpha \mid m} < c (m-m_t)^\alpha$ for some $j$-independent constant $0 < c < 1$. To see this, we first note that 
\begin{align}
    \frac{B(\alpha + 1/2,1/2)}{\pi}\left(\frac{2r_m\sin\theta_{m_t,m}}{m-m_t}\right)^\alpha \leq \frac{B(\alpha + 1/2,1/2)}{\pi}2^\alpha,\label{eq:beta_geo_upperbound}
\end{align}
using the fact that $r_0 \leq r_{m_t} \leq r_m$ for all $m$ such that $m \geq m_t$, $m \leq j$. Now let $f(\alpha) = \ln[ B(\alpha+1/2,1/2)2^\alpha/\pi]$. We have $f''(\alpha)= \psi^{(1)}(\alpha+1/2) - \psi^{(1)}(\alpha+1)$, where $\psi^{(n)}(\alpha) = d^{(n+1)} \ln\Gamma(\alpha)/d^{(n+1)}\alpha$ is the polygamma function of order $n$. From the series representation $\psi^{(n)}(\alpha) = (-1)^{n+1}n!\sum_{k=0}^\infty (\alpha+k)^{-(n+1)}$~\cite{abramowitz_handbook_1972}, we see that $\psi^{(1)}(\alpha)$ is strictly decreasing for all $\alpha > 0$, since $\psi^{(2)}(\alpha) < 0$. This implies that $f''(\alpha) > 0$, so $f(\alpha)$ is convex for $\alpha > 0$. Now, $f(0) = 0$ and $f(1)=0$, so for $0 <\alpha<1$ we have $f(\alpha) < 0$ and the right-hand side of Eq.~\eqref{eq:beta_geo_upperbound} is strictly upper bounded by $1$. For any fixed $\alpha$ in this range, we may therefore select a ($j$-independent) constant $c<1$ such that Eq.~\eqref{eq:beta_geo} is strictly upper bounded by $c(m-m_t)^\alpha$.

Thus $\ev{(m'-m_t)^\alpha}$ decays by a factor of $c$ at each step of Algorithm~\ref{alg:general}. Let $\ev{(m'-m_t)^\alpha}_k$ denote the expectation after $k$ steps. We initially have $\ev{(m'-m_t)^\alpha}_0 = (j-m_t)^\alpha$, so by induction $\ev{(m' - m_t)^\alpha}_k < c^k (j-m_t)^\alpha$. Now, if we require $\braket{(m-m_t)^\alpha}_n < \varepsilon$ for some $\varepsilon > 0$, this can be achieved in
\begin{align}
    n > \frac{\alpha\ln (j-m_t) + \ln (1/\varepsilon)}{\ln(1/c)}
\end{align}
steps, or $n = O(\log (j-m_t))$, as claimed.

\newpage
\section{Supplemental Material: Asymptotic expansions of Wigner $d$-matrices}

In this Supplement, we rigorously analyze the time to prepare the $m_t = 0$ Dicke state. We proceed via three lemmas.

\setcounter{lemma}{0}

\begin{lemma}
\label{lem:sqrt.j.step}
Starting from $m=j$, we can obtain a state with $\abs{m' } \le \sqrt{j}$ in expected $O(1)$ time.
\end{lemma}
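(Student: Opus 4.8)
The plan is to show that a single rotate-and-measure step starting from $m=j$ lands in the target window $\abs{m'}\le\sqrt{j}$ with a probability bounded below by a constant independent of $j$. Since any failure triggers the reset on line 5 of Algorithm~\ref{alg} and returns the system to $m=j$, the successive steps become independent, identically distributed Bernoulli trials, so the expected number of steps is the reciprocal of this success probability, i.e.\ $O(1)$.

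First I would pin down the rotation. For $m=j$ the angle is $\theta_j=\arcsin(j/j)=\pi/2$, so the step applies $e^{-i(\pi/2)J_y}$ to the all-up state $\ket{j,j}=\ket{0}^{\otimes 2j}$. Because $J_y=\sum_k(S_k)_y$ acts independently on each qubit, this is a tensor product of single-qubit $\pi/2$ rotations, each taking $\ket{0}\mapsto(\ket{0}+\ket{1})/\sqrt{2}$; the result is the equatorial coherent spin state, a product of $\ket{+}$ states.

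Next I would read off the measurement statistics. A $J_z$ measurement on this product state is equivalent to a Hamming-weight measurement, and the weight $w$ is a sum of $2j$ independent fair coin flips, hence $w\sim\mathrm{Bin}(2j,\tfrac12)$. Writing $m'=j-w$ gives $\ev{m'}=0$ and $\Var(m')=\Var(w)=j/2$, so Chebyshev's inequality yields
\begin{equation}
\Pr[\abs{m'}\ge\sqrt{j}]\le\frac{\Var(m')}{j}=\frac12.
\end{equation}
Thus a single step reaches $\abs{m'}\le\sqrt{j}$ with probability at least $\tfrac12$.

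Finally, since the reset sends every outcome with $\abs{m'}>\sqrt{j}$ back to $m=j$, the steps form independent trials each succeeding with probability $p\ge\tfrac12$; the number of steps to first success is geometric with expectation $1/p\le2=O(1)$, proving the claim. I do not expect a substantive obstacle. The only point requiring care is verifying that the $\pi/2$ rotation of the all-up state is exactly the equatorial product state, so that the exact binomial distribution---and in particular the variance $j/2$---applies verbatim; a Gaussian (de Moivre--Laplace) approximation would sharpen the constant but is unnecessary for the $O(1)$ bound.
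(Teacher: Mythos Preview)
Your proposal is correct and follows essentially the same approach as the paper: identify $\theta_j=\pi/2$, note that the rotated state is the product of $\ket{+}$ states so the weight is $\mathrm{Bin}(2j,\tfrac12)$, and conclude that landing in $\abs{m'}\le\sqrt{j}$ happens with $j$-independent probability, making the expected number of resets $O(1)$. The only difference is that you make the constant explicit via Chebyshev (giving $p\ge\tfrac12$), whereas the paper simply asserts that being more than one standard deviation from the mean has probability bounded below~$1$.
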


\begin{proof}
From $m = j$, we rotate by $\theta_j = \frac{\pi}{2}$, which gives the state
\begin{equation}
e^{-i (\pi/2) J_y} \ket{j, j} = \parens{\frac{\ket{0} + \ket{1}}{\sqrt{2}}}^{\otimes n} = \frac{1}{2^{n/2}} \sum_{x \in \{0, 1\}^n} \ket{x}.
\end{equation}
The probability of measuring weight $w$ is ${\binom{n}{w}}/{2^n}$, which is a binomial distribution with mean $\frac{n}{2}$ and variance $\frac{n}{4}$. If $\abs{m'} > \sqrt{j}$, 
then the difference between $w$ and the mean is of the same order as the standard deviation, so that the probabilities of those instances sum up to a bounded $O(1)$ probability, independent of $n$. In such a case, we simply reset and try again. The expected number of attempts to succeed is $O(1)$.
\end{proof}

\begin{lemma}
\label{lem:d-asymptotic}
In the regime $m= \omega(1)$, $m = O( \sqrt{ j})$, and $0 <m' < 2m$, we have
\begin{equation}
\label{neq:wignerD.asymp.eff}
 d ^{ j } _{m' m} (\beta_{m} ) =
\sqrt{\frac{2}{\pi m} } 
\frac{\cos \left[ m (1-x) \arccos (1-x) - m\sqrt{1- (1 -x) ^{2}} +\frac{\pi}{4}\right ] }{  [1 - (1 -x) ^{2}]^{\frac{1}{4}} }
+ O(\max\{{m}^2 j^{-2},  {m}^{-1} j^{-1}\}),
\end{equation}
where $\beta _{m} = \arcsin(m/j) $ and $x = m'/m$.
\end{lemma}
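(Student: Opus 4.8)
The plan is to start from an exact integral representation of the matrix element and evaluate it by the method of stationary phase, exploiting the fact that the angle $\beta_m = \arcsin(m/j)$ from Eq.~\eqref{eq:angle} is chosen so that $j\sin\beta_m = m$ exactly. In the regime $m = O(\sqrt{j})$ the rotation angle is small, $\beta_m = O(1/\sqrt{j})$, and the matrix element collapses onto a Bessel function, $d^{j}_{m'm}(\beta_m) \approx J_{m-m'}(m)$; the right-hand side of Eq.~\eqref{neq:wignerD.asymp.eff} is then precisely the Debye (large-order) asymptotic of $J_{m-m'}(m)$ in its oscillatory region. Concretely, with $\nu = m-m'$ and argument $z=m$ one has $\nu/z = 1-x$, so the classical expansion $J_\nu(z)\sim\sqrt{2/(\pi\sqrt{z^2-\nu^2})}\,\cos[\sqrt{z^2-\nu^2}-\nu\arccos(\nu/z)-\pi/4]$ reproduces the prefactor $\sqrt{2/(\pi m)}\,[1-(1-x)^2]^{-1/4}$ and, since cosine is even, the stated phase. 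The hypothesis $0<m'<2m$ is exactly $\lvert 1-x\rvert<1$, i.e.\ $\nu^2<z^2$, which is the oscillatory window where this expansion holds (outside it $J_\nu$ is exponentially small, matching the fact that the element is negligible for other $m'$).

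To carry this out rigorously I would not cite the Bessel limit as a black box but instead use the Schwinger-boson coefficient-extraction representation
\begin{equation}
d^{j}_{m'm}(\beta_m) = \sqrt{\frac{(j+m')!\,(j-m')!}{(j+m)!\,(j-m)!}}\;\frac{1}{2\pi}\int_{-\pi}^{\pi}\bigl(c\,e^{i\phi}+s\bigr)^{j+m}\bigl(c-s\,e^{i\phi}\bigr)^{j-m}e^{-i(j+m')\phi}\,d\phi,
\end{equation}
with $c=\cos(\beta_m/2)$ and $s=\sin(\beta_m/2)$ (the internal signs fixed by the convention for $J_y$). Writing the integrand as $e^{g(\phi)}$ and expanding $g$ using $2sc = \sin\beta_m = m/j$, the leading imaginary part of the phase is $(m-m')\phi - m\sin\phi$, whose stationary points satisfy $\cos\phi_0 = (m-m')/m = 1-x$; these are real precisely in the window $0<m'<2m$, and the two stationary points $\pm\phi_0$ combine to give the real cosine together with the accompanying $\pi/4$ phase.

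The step I regard as the crux is that the expansion of $g$ also produces an $O(m^2/j)$ real part, which naively would spoil the clean Bessel form; evaluated at the saddle it equals $+(m^2-m'^2)/(2j)$. At the same time, applying Stirling's formula to the factorial prefactor gives a factor $\exp[(m'^2-m^2)/(2j)]$, so that
\begin{equation}
\exp\!\left(\frac{m'^2-m^2}{2j}\right)\exp\!\left(\frac{m^2-m'^2}{2j}\right) = 1.
\end{equation}
These $O(1)$ exponential contributions cancel exactly, which is precisely what the choice $\beta_m = \arcsin(m/j)$ is engineered to achieve, and is the reason the Bessel argument comes out to be exactly $m$ with no residual Gaussian envelope. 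Combining the surviving oscillatory saddle value with the standard Gaussian stationary-phase prefactor $\propto 1/\sqrt{m\,\lvert\sin\phi_0\rvert} \propto m^{-1/2}[1-(1-x)^2]^{-1/4}$ then yields the stated leading term.

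The main obstacle is the uniform error control that delivers the sharp bound $O(\max\{m^2/j^2,\,1/(mj)\})$, and I expect the bulk of the work to be here. There are two scales to track: corrections from replacing $\arcsin(m/j)$ and its trigonometric functions by their leading behavior, which enter through powers of $\sin^2(\beta_m/2) = O(m^2/j^2)$ in the saddle location, phase, and amplitude; and the subleading term of the stationary-phase/Debye series, which is relatively smaller by a factor set by the curvature $g''(\phi_0)$ and contributes the $1/(mj)$ piece that dominates for small $m=\omega(1)$. Obtaining the sharp bound requires keeping these contributions to the right order and exploiting the same kind of cancellations used above, rather than bounding each term crudely. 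Finally, I would treat the turning points $m'\to 0$ and $m'\to 2m$ separately: there $1-(1-x)^2\to 0$, the prefactor diverges, and the two saddles coalesce into an Airy regime, so the expansion holds away from a shrinking neighborhood of the endpoints while just outside the window the element is exponentially suppressed and hence negligible. Assembling the saddle contributions, the Stirling prefactor, and these error estimates gives Eq.~\eqref{neq:wignerD.asymp.eff}.
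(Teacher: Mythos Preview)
Your proposal is correct and follows essentially the same route as the paper: an exact integral representation of $d^{j}_{m'm}(\beta_m)$ is evaluated by stationary phase, the stationary points satisfy $\cos\phi_0=1-x$ in the window $0<m'<2m$, and the two conjugate saddles combine into the stated cosine with the $\pi/4$ shift and the $[1-(1-x)^2]^{-1/4}$ prefactor. One point worth noting: the cancellation you single out as the crux---between the $\exp[(m^2-m'^2)/(2j)]$ coming from the slowly varying amplitude at the saddle and the $\exp[(m'^2-m^2)/(2j)]$ from Stirling on the factorial prefactor---is indeed essential for $m$ comparable to $\sqrt{j}$, and the paper passes over it silently (it simply evaluates $g(\phi_s)$ together with the prefactor and records the residual $[(j^2-m'^2)/(j^2-m^2)]^{1/4}\approx 1$), so your making it explicit is a genuine improvement in clarity rather than a deviation in method.
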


\begin{proof} 
We start by rewriting the Wigner $d$-matrix element as the following integral (see Eq.~(11) in~\cite{WignerDreview1988}):
\begin{align}
d ^{ j } _{m' m} (\beta _{m}) &= \frac{(-1)^{m'-m} }{2 \pi}
\left [ \frac{(j+m')! (j-m')! }{(j+m)! (j-m)! }\right ] ^{\frac{1}{2}}
\nonumber \\ 
& \quad \times \int ^{2\pi} _{0} \left ( e ^{i\frac{\phi}{2}} \cos \frac{\beta _{m}}{2} + i e ^{-i\frac{\phi}{2}} \sin \frac{\beta_{m}}{2} \right ) ^{j-m} 
\left ( e ^{-i\frac{\phi}{2}} \cos \frac{\beta _{m}}{2} + i e ^{i\frac{\phi}{2}} \sin \frac{\beta _{m}}{2} \right ) ^{j+m} 
e ^{i m' \phi} \dd{\phi}
.
\end{align}
By shifting the integration variable $\phi \to \phi + \frac{\pi}{2}$, we can equivalently write this as 
\begin{align}
d ^{ j } _{m' m} (\beta _{m}) &= \frac{(-1)^{m'-m} }{2 \pi} 
\left [ \frac{(j+m')! (j-m')! }{(j+m)! (j-m)! }\right ] ^{\frac{1}{2}} 
\nonumber \\ 
\label{neq:wignerD.int.gen}
& \quad \times \int ^{2\pi} _{0} \left ( \cos \frac{\beta _{m}}{2} +  e ^{-i\phi } \sin \frac{\beta _{m}}{2} \right ) ^{j-m} 
\left ( \cos \frac{\beta _{m}}{2} - e ^{i\phi} \sin \frac{\beta _{m}}{2} \right ) ^{j+m} 
e ^{i (m'-m) \phi}  \dd{\phi}
.
\end{align}
We now focus on the integrand, which can be split into the product of a positive magnitude and a phase part: 
\begin{align}
\label{neq:wignerD.split}
d ^{ j } _{m' m} (\beta _{m}) &= \frac{(-1)^{m'-m} }{2 \pi} 
\left [ \frac{(j+m')! (j-m')! }{(j+m)! (j-m)! }\right ] ^{\frac{1}{2}}   \int ^{2\pi} _{0} g(\phi) e ^{i f(\phi) }  \dd{\phi}
, \\
g(\phi) &= \exp \left [ \frac{j-m}{2} \log (1+\sin \beta_{m}\cos \phi ) + \frac{j+m}{2} \log (1- \sin \beta _{m}\cos \phi )\right]
, \\
f (\phi) &=  (m'-m) \phi -  (j-m) \arctan \frac{ \sin \phi \tan \frac{\beta _{m}}{2}}{1+ \cos \phi \tan \frac{\beta _{m}}{2}}  -  (j+m) \arctan \frac{ \sin \phi \tan \frac{\beta _{m}}{2}}{1 - \cos \phi \tan \frac{\beta _{m}}{2}} 
. 
\end{align}
We choose the rotation angle $\beta _{m} = \arcsin (m/j)$ according to the optimal angle predicted by the geometric picture, as discussed in the main text.

In order to make further approximations in the $j \to \infty$ limit, we first consider the derivatives of the functions $g(\phi) $ and $f (\phi) $ with respect to $\phi$:
\begin{align}
& \frac{\dd{g(\phi)}}{\dd{\phi}} = - g(\phi ) \sin \beta _{m}\sin \phi  \frac{j \sin \beta _{m}\cos \phi +m}{1- \sin ^{2}\beta _{m}\cos ^{2} \phi }  
, \\
& \frac{\dd{f(\phi)}}{\dd{\phi}} = - (m-m') - \frac{(j-m)\tan \frac{\beta _{m}}{2} (\cos \phi  + \tan \frac{\beta _{m}}{2}) }{ 1+ (\tan \frac{\beta _{m}}{2} )^{2} +2\cos \phi \tan \frac{\beta _{m}}{2} } 
- \frac{(j+m)\tan \frac{\beta_{m}}{2} (\cos \phi  - \tan \frac{\beta _{m}}{2}) }{ 1+ (\tan \frac{\beta_{m}}{2} )^{2} -2\cos \phi \tan \frac{\beta_{m}}{2} } 
. 
\end{align}

For $m, m' = \omega(1)$ and $m = O( \sqrt{j})$, the magnitude function $g(\phi)$ varies much slower (by an extra factor of $\sin \beta _{m} = m/j$) relative to the phase function $f (\phi)$, and we can apply the stationary phase approximation to the integral in Eq.~\eqref{neq:wignerD.int.gen}. In this regime, noting that $\sin \beta _{m}= m/j$, the derivative of the phase function can be well approximated as
\begin{align}
\label{neq:st.phase.deriv}
\frac{\dd{f(\phi)}}{\dd{\phi}} = - (m-m')  - m \cos \phi   [1+
O({m}^2 j^{-2})], 
\end{align}
so that, under the stationary phase approximation (see Sec.~3 of Ref.~\cite{stationary_phase_book_2001}), we have that in the $j \to\infty$ limit, with $m, m' = \omega(1)$ and $m = O( \sqrt{j})$, the following equation holds: 
\begin{align}
\label{neq:WignerD.sph.approx}
\lim _{\substack{j \to\infty \\ m, m' = \omega(1)\\ m = O( \sqrt{j})}}
\int ^{2\pi} _{0} &  g(\phi) e ^{i f(\phi) }  \dd{\phi}
= \sum _{\phi _{s} = \pi \pm \arccos \frac{m-m'}{m}}
g(\phi _{s}) e ^{i f(\phi _{s} ) \pm i\frac{\pi}{4} }  \sqrt{\frac{2\pi }{\left |\frac{d^{2} f(\phi )}{d \phi ^{2} } \right | _{\phi = \phi_{s}} }}
+ O(\max\{{m}^2 j^{-2},  {m}^{-1} j^{-1}\}).
\end{align}
Substituting Eq.~\eqref{neq:WignerD.sph.approx} back into Eq.~\eqref{neq:wignerD.split}, we obtain the desired approximate asymptotic expression for the Wigner $d$-matrix in this regime:
\begin{align}
\label{neq:wignerD.asymp.final}
\lim _{\substack{j \to\infty \\ m, m' = \omega(1)\\ m = O( \sqrt{j})}}
d ^{ j } _{m' m} (\beta _{m} ) & = \frac{\sqrt{\frac{2}{\pi} } \left [\frac{j^{2}-{m'}^{2}}{j^{2}-{m}^{2}} \right] ^{\frac{1}{4}} }{[{m}^{2} - (m -m') ^{2}]^{\frac{1}{4}} } \cos \left[ (m -m') \arccos \frac{m-m'}{m } - \sqrt{{m}^{2}- (m -m') ^{2}} +\frac{\pi}{4}\right ]
\nonumber \\
& \quad + O(\max\{{m}^2 j^{-2},  {m}^{-1} j^{-1}\}).
\end{align}
Rewriting with $x = m'/m$ gives the result.
\end{proof}

Note that, if $m' < 0$ or $m' > 2m$, then Eq.~\eqref{neq:st.phase.deriv} does not have a solution for $\phi \in [0, 2\pi]$, meaning that the $d$-matrix element is negligible up to leading order.

As a corollary of Lemma~\ref{lem:d-asymptotic}, in the regime where $m = \omega(1)$ and $\abs{m' - m} = O(1)$, we have
\begin{equation}
\label{neq:wignerd.bessel}
\lim _{j\to\infty} d_{m'm}^j(\beta _{m}) = J_{m-m'}(m)  ,
\end{equation}
where $J _{\ell} (x)$ is the Bessel function of the first kind. As the zeros of the Bessel function are transcendental \cite{Siegel1929}, we see that the asymptotic transition probabilities from $m$ to $m'$ are nonzero in this regime with $|d_{m'm}^j(\beta _{m})|^{2} = \Theta (1)$.

Define the random variable $M$ by
\begin{equation}
\label{neq:def.M.via.m}
M  = \begin{cases} m & \abs{m} \le \sqrt{j} \\ \sqrt{j}+1 & \abs{m} > \sqrt{j}. \end{cases}
\end{equation}
This is a proxy for $m$. We also introduce $M'$ as a function of $m'$ in a similar fashion, as the reset drastically increases $m'$ to $j$, making the expectations of $\ev{{m'}^\alpha}$ suboptimal for the runtime analysis.

\begin{lemma}
\label{lem:ratio-small}
There exists a constant $c < 1$ and positive exponent $0<\alpha<1$ such that, for every $m = \omega (1)$ and  $m \le \sqrt{j}$,
\begin{equation}
\sum_{m'} \abs{d ^{ j } _{m' m} (\beta_{m} )}^2 \frac{{M '}^\alpha }{M ^\alpha } < c.
\end{equation}
\end{lemma}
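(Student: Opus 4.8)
The plan is to recognize the left-hand side as the conditional expectation $\ev{(M'/M)^\alpha \mid m}$ of the step ratio, and to evaluate it by substituting the asymptotic Wigner $d$-matrix from Lemma~\ref{lem:d-asymptotic}. I expect it to converge, as $j\to\infty$, to exactly the beta-function expression appearing in the End Matter's geometric argument, which is strictly below $1$ for an appropriate choice of $\alpha$. Throughout this regime $m\le\sqrt{j}$, so by definition~\eqref{neq:def.M.via.m} we have $M=m$ and the denominator is simply $m^\alpha$; moreover, by the remark following Lemma~\ref{lem:d-asymptotic}, only $0<m'<2m$ contributes at leading order, and the $m'=0$ term vanishes identically in the numerator.

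First I would dispose of the proxy variable. Since $M'=m'$ for $m'\le\sqrt{j}$ and $M'=\sqrt{j}+1\le m'$ for $m'>\sqrt{j}$, one has $M'\le m'$ except possibly at the single boundary index $m'=\lceil\sqrt{j}\rceil$, where $M'/m'\le 1+1/\sqrt{j}$. Because this exceptional term occurs only when $m>\sqrt{j}/2=\omega(1)$ and carries weight $\abs{d^j_{m'm}}^2=O(1/m)$, replacing $M'$ by $m'$ costs only a multiplicative factor $1+O(1/(m\sqrt{j}))=1+o(1)$. Next I would insert the asymptotic form~\eqref{neq:wignerD.asymp.eff} for $\abs{d^j_{m'm}(\beta_m)}^2$. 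The additive error there is $O(\max\{m^2 j^{-2},(mj)^{-1}\})=O(1/j)$ for $m\le\sqrt{j}$; expanding the square and summing the cross and error terms over the $O(m)$ relevant values of $m'$ (each weighted by $(m'/m)^\alpha\le 2^\alpha$) yields contributions $O(\sqrt{m}/j)$ and $O(\sqrt{m}/j^2)$, both $o(1)$. Thus the sum reduces, up to $o(1)$, to $\sum_{m'} \tfrac{2}{\pi m}\,[1-(1-x)^2]^{-1/2}\cos^2[\Phi(x)]\,(m'/m)^\alpha$ with $x=m'/m$ and $\Phi$ the phase in~\eqref{neq:wignerD.asymp.eff}.

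The core step is to write $\cos^2\Phi=\tfrac12(1+\cos 2\Phi)$ and treat the two pieces separately. The constant piece is a Riemann sum with spacing $1/m$ that converges to
\[
\frac{1}{\pi}\int_0^2 \frac{x^\alpha}{\sqrt{x(2-x)}}\,\dd{x}=\frac{2^\alpha}{\pi}\,B(\alpha+1/2,1/2),
\]
using $1-(1-x)^2=x(2-x)$; this is precisely the upper bound of Eq.~\eqref{eq:beta_geo_upperbound}. I would then invoke the convexity argument already given there: with $f(\alpha)=\ln[2^\alpha B(\alpha+1/2,1/2)/\pi]$ one has $f(0)=f(1)=0$ and $f''>0$, so $f(\alpha)<0$ and the main term is a $j$-independent constant strictly below $1$ for any fixed $0<\alpha<1$. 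Fixing such an $\alpha$ and absorbing the $o(1)$ corrections gives the claimed $c<1$ for all sufficiently large $m$, i.e.\ $m=\omega(1)$.

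The main obstacle will be showing that the oscillatory piece $\tfrac{1}{\pi m}\sum_{m'}[x(2-x)]^{-1/2}\cos[2\Phi(x)]\,x^\alpha$ is negligible. Writing $\Phi(x)=m\psi(x)+\pi/4$, a short computation gives $\psi'(x)=-\arccos(1-x)$, so the phase advances by $-2\arccos(1-x)$ per unit step in $m'$; this is $\Theta(1)$ and bounded away from $0$ except near the endpoint $x=0$, where the phase becomes stationary. Away from the endpoints the sum is controlled by discrete summation by parts (a Riemann–Lebesgue-type estimate), giving $o(1)$. The delicate region is $m'$ near $0$, where the stationary phase coincides with the integrable singularity $x^{\alpha-1/2}$ of the density; here I would isolate a window of width $O(m^{1-\delta})$ in $m'$, bound its contribution directly by the integrable weight, and apply an endpoint stationary-phase estimate to show it is $O(m^{-\rho})$ for some $\rho>0$, hence $o(1)$. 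Combining the constant main term ($<1$) with these vanishing oscillatory and error contributions yields the lemma.
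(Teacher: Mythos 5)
Your proposal follows the same analytic core as the paper's proof---insert the Lemma~\ref{lem:d-asymptotic} asymptotics, recognize the resulting sum as a Riemann sum for $\frac{2^\alpha}{\pi}B(\alpha+\tfrac12,\tfrac12)$, and conclude via the convexity argument that this constant is below $1$---but it is structurally different in one notable way: by dominating the proxy pointwise, $M'^\alpha \le (1+1/\sqrt{j})^\alpha\, m'^\alpha$, you collapse the paper's two-case analysis (the case $m<\sqrt{j}/2$ with negligible reset probability versus the case $\sqrt{j}/2 < m \le \sqrt{j}$, which the paper handles through the $\zeta = \arccos(\sqrt{j}/m-1)$ monotonicity computation) into a single estimate. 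Where it is valid, this is a genuine simplification, since the capped reset weight $(\sqrt{j}+1)^\alpha$ is pointwise below the uncapped weight $m'^\alpha$ whose total is already shown to be below $1$. You are also more explicit than the paper about the oscillatory $\cos 2\Phi$ contribution, which the paper silently absorbs into its error terms.

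However, the pointwise replacement of $M'$ by $m'$ is precisely where your proof has a genuine gap: it discards the cap that the proxy variable was introduced to provide, and your dismissal of outcomes outside $(0,2m)$ "at leading order" is not supported by anything established in the paper. Lemma~\ref{lem:d-asymptotic} gives asymptotics only for $0<m'<2m$ and says only that elements are "negligible" elsewhere, with no rate. If one reads "negligible" as the stated error size $O(\max\{m^2/j^2, 1/(mj)\})$, then the weighted tail you must control, $\sum_{m'>2m}\abs{d^j_{m'm}}^2 (m'/m)^\alpha$ with $m'$ ranging up to $j$, is bounded only by $O(j\cdot \frac{1}{mj}\cdot (j/m)^\alpha) = O(j^\alpha/m^{1+\alpha})$, which diverges for slowly growing $m=\omega(1)$ (e.g., $m\sim\log j$) and any fixed $\alpha>0$. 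The paper's proof never needs per-element control in this classically forbidden region: reset outcomes contribute at most $[\text{total reset probability}]\cdot(\sqrt{j}+1)^\alpha/m^\alpha$, and the total reset probability is obtained by normalization ($1$ minus the computed main-region sum), so only the main-region asymptotics are ever used. The fix is easy---keep the cap for $m'>\sqrt{j}$ and bound that block by tail probability times $(\sqrt{j}+1)^\alpha$, or invoke exponential (Airy-type) decay of Wigner $d$ elements beyond the turning point, which is true but proved nowhere in the paper---but as written this step fails.

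A secondary, repairable oversight: in your oscillatory analysis, the claim that the phase step $-2\arccos(1-x)$ per unit increment of $m'$ is "bounded away from $0$ except near the endpoint $x=0$" is false modulo $2\pi$. Near $x=2$ the step approaches $-2\pi$, so consecutive terms of the \emph{discrete} sum add coherently (aliasing), and summation by parts fails there just as it does at $x=0$. This second effective stationary point is handled by the same endpoint estimate you propose for $x=0$ (its contribution is $O(m^{-1/3})$), but your argument as stated does not notice it; note this issue is invisible in the paper's continuum treatment, which is part of why the paper's sum-to-integral step is itself less than fully rigorous.
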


\begin{proof}
Now consider
\begin{align} 
\label{neq:MC.crit.ratio.sep}
& \sum _{m' } \mathcal{P} (m \to m ' )  \frac{{M'}^{\alpha} }{M^{\alpha} } 
= \frac{ (1- \sum _{m '<\sqrt{j} } |d ^{ j } _{m 'm} (\beta _{m  } ) | ^{2}  ) (\sqrt{j}+1)^{\alpha} + \sum _{m' <\sqrt{j} } |d ^{ j } _{m' m} (\beta _{m  } ) | ^{2}  {m'}^{\alpha} }{m^{\alpha} } 
. 
\end{align}
We can make use of the asymptotic expression in Eq.~\eqref{neq:wignerD.asymp.final} to simplify this expression. We first note that, for $j \to\infty$ with $m, m' = \omega(1)$ and $m = O( \sqrt{j})$, the reset probability is given by 
\begin{align}
& 1- \sum _{m' <\sqrt{j} } |d ^{ j } _{m' m} (\beta _{m  } ) | ^{2} 
\nonumber \\
&\quad= 1- \frac{1}{\pi} \sum  _{m' <\sqrt{j} }  \left \{
\frac{ 1+ \sin \left[ 2 (m -m') \arccos \frac{m-m'}{m } - 2 \sqrt{{m}^{2}- (m -m') ^{2}} \right ] }{\sqrt{{m}^{2} - (m -m') ^{2}} } + O(\max\{{m}^2 j^{-2},  {m}^{-1} j^{-1}\}) 
\right \}
.
\end{align}
In the large-$j$ limit, we can further approximate this expression as an integral: making use of Eq.~\eqref{neq:wignerD.asymp.eff}, we have (henceforth still under the conditions $j \to\infty$, $m, m' = \omega(1)$, and $m = O( \sqrt{j})$)
\begin{align}
& 1- \sum _{m' <\sqrt{j} } |d ^{ j } _{m'm} (\beta _{m  } ) | ^{2} 
\nonumber \\
&~~= 1- \int ^{2} _{0} H\left (\frac{\sqrt{j}}{m} -x \right )
\frac{1+\sin \left[2 m (1-x) \arccos (1-x) - 2 m\sqrt{1- (1 -x) ^{2}} +\frac{\pi}{2}\right ] }{ \sqrt{1 - (1 -x) ^{2}} }
\frac{\dd{x}}{\pi }
+ O(\max\{{m}^2 j^{-\frac{3}{2}},  {m}^{-1} j^{-\frac{1}{2}}\})
\nonumber \\
\label{neq:prob.reset.int}
&~~= 1- \int ^{2} _{0} H\left (\frac{\sqrt{j}}{m} -x \right ) \frac{\dd{x}}{ \pi \sqrt{1 - (1 -x) ^{2}} }
+ O(\max\{{m}^2 j^{-\frac{3}{2}},  {m}^{-1} j^{-\frac{1}{2}}\})
. 
\end{align}
Here, $H(\cdot)$ denotes the Heaviside step function. Now we divide the analysis of Eq.~\eqref{neq:MC.crit.ratio.sep} into two cases. First, if $m<\sqrt{j}/2$, then the step function is always $1$, so the integral is $\int_0^2 \frac{\dd{x}}{ \pi \sqrt{1 - (1 -x) ^{2}}}=1$, and the reset probability in Eq.~\eqref{neq:prob.reset.int} is negligible. In this case, Eq.~\eqref{neq:MC.crit.ratio.sep} gives
\begin{align}  
m<\sqrt{j}/2: \quad 
& \sum _{m '} \mathcal{P} (m  \to m ' )  \frac{{m'}^{\alpha} }{m^{\alpha} } 
= \sum _{0\le m' \le 2 m } |d ^{ j } _{m' m} (\beta _{m  } ) | ^{2} \frac{  m'^{\alpha}  }{m^{\alpha} } 
+ O(\max\{{m}^2 j^{-\frac{3}{2}},  {m}^{-1} j^{-\frac{1}{2}}\})
.  
\end{align}
Substituting Eq.~\eqref{neq:wignerD.asymp.eff} into the above equation, we obtain
\begin{align} 
m<\sqrt{j}/2: \quad 
& \sum _{m' } \mathcal{P} (m  \to m ' )  \frac{m'^{\alpha} }{m^{\alpha} } 
\nonumber \\
&= \int ^{2} _{0}  x ^{\alpha}
\frac{1+\sin \left[2 m (1-x) \arccos (1-x) - 2 m\sqrt{1- (1 -x) ^{2}} +\frac{\pi}{2}\right ] }{ \sqrt{1 - (1 -x) ^{2}} }
\frac{\dd{x}}{\pi } 
+ O(\max\{{m}^2 j^{-\frac{3}{2}},  {m}^{-1} j^{-\frac{1}{2}}\})
\nonumber \\
\label{neq:ratio.bound.m.lower}
&= \int ^{2} _{0} 
\frac{ x ^{\alpha} \dd{x} }{\pi \sqrt{1 - (1 -x) ^{2}} } 
+ O(\max\{{m}^2 j^{-\frac{3}{2}},  {m}^{-1} j^{-\frac{1}{2}}\})
.
\end{align}
We can explicitly compute the integral on the right-hand side as 
\begin{align}
\int ^{2} _{0} 
\frac{ x ^{\alpha} \dd{x} }{\pi \sqrt{1 - (1 -x) ^{2}} } = \frac{1}{\pi }
\int ^{\pi} _{0} 
(1 -\cos \theta  )  ^{\alpha} \dd{\theta}
= \frac{2 ^{\alpha} }{\pi }
\int ^{\pi} _{0} 
\sin ^{2\alpha} \frac{\theta }{2} \dd{\theta}
= \frac{2^{\alpha}}{\pi} B (\alpha+\tfrac{1}{2},\tfrac{1}{2})
,
\label{eq:beta}
\end{align}
where $B(\cdot,\cdot)$ denotes the beta function.

Noting that $\sin ^{2\alpha} \frac{\theta }{2} < (\frac{\theta }{2}) ^{2\alpha} $ for all $\theta >0$, we can upper bound Eq.~\eqref{eq:beta} as 
\begin{align}
\frac{2 ^{\alpha} }{\pi }
\int ^{\pi} _{0} 
\sin ^{2\alpha} \frac{\theta }{2} \dd{\theta}
<\frac{2 ^{\alpha} }{\pi }
\int ^{\pi} _{0} \frac{\theta^{2\alpha}  }{2^{2\alpha} }\dd{\theta}
=\frac{\pi ^{2\alpha } }{2 ^{\alpha} (2\alpha+1) }
.
\label{eq:beta_upper_bound}
\end{align}
Expanding the function $\frac{ \pi  ^{2\alpha }}{ 2 ^{\alpha}}- (2\alpha +1) $ in a Taylor series, it is straightforward to show that this expression is strictly smaller than $0$ for $\alpha \in (0,0.1)$, so for such values of $\alpha$, $\pi^{2\alpha}/[2^\alpha(2\alpha+1)]<1$, and therefore
\begin{align} 
m = \omega (1), \, m<\sqrt{j}/2: \quad 
\sum _{m' } \mathcal{P} (m  \to m  ')  \frac{m'^{\alpha} }{m^{\alpha} } < 1
. 
\end{align}
(Alternatively, we show in the End Matter that Eq.~\eqref{eq:beta} is at most $1$ for all $\alpha \in (0,1)$.)

In the other regime $m>\sqrt{j}/2$, we compute the contribution from the reset separately. In this case, from Eq.~\eqref{neq:prob.reset.int}, we can write the reset probability as
\begin{align} 
\sqrt{j} \ge m>\sqrt{j}/2: \quad 
1- \sum _{m' <\sqrt{j} } |d ^{ j } _{m' m} (\beta _{m  } ) | ^{2} 
&=   1- \int ^{\frac{\sqrt{j}}{m} } _{0} \frac{\dd{x}}{ \pi \sqrt{1 - (1 -x) ^{2}} } + O(\max\{{m}^{3} j^{-2},  j^{-1}\}) 
\\
&= \frac{1}{2} - \frac{\arcsin(\frac{\sqrt{j}}{m} -1)}{\pi} + O({m}^{3} j^{-2} ) 
, 
\end{align}
so that, in this regime, we can derive an upper bound for Eq.~\eqref{neq:MC.crit.ratio.sep} as
\begin{align} 
\sum _{m' } \mathcal{P} (m \to m '  )  \frac{m'^{\alpha} }{m^{\alpha} }
 &< \left( \frac{1}{2} - \frac{\arcsin(\frac{\sqrt{j}}{m} -1)}{\pi} + O({m}^{3} j^{-2} ) \right ) \frac{(\sqrt{j}+1)^{\alpha} }{m^{\alpha}} 
+ \sum _{m' <\sqrt{j} } |d ^{ j } _{m'm} (\beta _{m  } ) | ^{2} \frac{m'^{\alpha}}{m^{\alpha} }
\nonumber \\
 &= \left( \frac{1}{2} - \frac{\arcsin(\frac{\sqrt{j}}{m} -1)}{\pi} + O({m}^{3} j^{-2} ) \right ) \left( \frac{\sqrt{j}}{m }  \right ) ^{\alpha} \left(1+ O(j^{-\frac{1}{2}})\right  )
\nonumber \\
&\quad+ \int ^{\frac{\sqrt{j}}{m}} _{0}  x ^{\alpha}
\frac{1+\sin \left[2 m (1-x) \arccos (1-x) - 2 m\sqrt{1- (1 -x) ^{2}} +\frac{\pi}{2}\right ] }{ \sqrt{1 - (1 -x) ^{2}} }
\frac{\dd{x}}{\pi } + O({m}^{3} j^{-2} ) 
\nonumber \\
&= \left( \frac{1}{2} - \frac{\arcsin(\frac{\sqrt{j}}{m} -1)}{\pi} \right ) \left( \frac{\sqrt{j}}{m }  \right ) ^{\alpha} 
+ \int ^{\frac{\sqrt{j}}{m}} _{0} 
\frac{ x ^{\alpha} \dd{x} }{\pi \sqrt{1 - (1 -x) ^{2}} } 
+ O({m}^{3-\alpha} j^{-2+\frac{\alpha}{2}} ) 
\nonumber \\
&= \left( \frac{1}{2} - \frac{\arcsin(\frac{\sqrt{j}}{m} -1)}{\pi} \right ) \left( \frac{\sqrt{j}}{m }  \right ) ^{\alpha} 
+ \frac{1}{\pi }
\int ^{\pi - \arccos(\frac{\sqrt{j}}{m}-1)} _{0} 
(1 -\cos \theta  )  ^{\alpha} \dd{\theta}
+ O({m}^{3-\alpha} j^{-2+\frac{\alpha}{2}} ) 
\nonumber \\
&= \frac{\arccos(\frac{\sqrt{j}}{m} -1)}{\pi} \left( \frac{\sqrt{j}}{m }  \right ) ^{\alpha} 
+ \frac{2 ^{\alpha}}{\pi }
\int ^{\pi - \arccos(\frac{\sqrt{j}}{m}-1)} _{0} 
\sin ^{2\alpha} \frac{\theta }{2} \dd{\theta}
+ O({m}^{3-\alpha} j^{-2+\frac{\alpha}{2}} ) 
\nonumber \\
\label{neq:ratio.approx.mp.l}
&< \frac{\arccos(\frac{\sqrt{j}}{m} -1)}{\pi} \left( \frac{\sqrt{j}}{m }  \right ) ^{\alpha} 
+ \frac{\left [\pi - \arccos(\frac{\sqrt{j}}{m}-1) \right ]^{2\alpha +1}}{\pi (2\alpha +1) 2 ^{\alpha}}
+ O({m}^{3-\alpha} j^{-2+\frac{\alpha}{2}} ) 
. 
\end{align}
We can show that the right-hand side of Eq.~\eqref{neq:ratio.approx.mp.l} is again asymptotically upper bounded by a number smaller than $1$. Specifically, setting $\arccos(\frac{\sqrt{j}}{m} -1) = \zeta$, we can rewrite the first two terms in Eq.~\eqref{neq:ratio.approx.mp.l} as
\begin{align}
\label{neq:ratio.m0.large}
\frac{\zeta}{\pi} \left( 1+\cos \zeta \right ) ^{\alpha} 
+ \frac{\left (\pi - \zeta \right ) ^{2\alpha +1}}{\pi (2\alpha +1) 2 ^{\alpha}}
, \quad 
\zeta \in \left(0,\frac{\pi}{2}\right)
.
\end{align}
For $\alpha \in (0,1)$, the function in Eq.~\eqref{neq:ratio.m0.large} monotonically decreases as $\zeta$ increases, so that 
\begin{align} 
\zeta \in (0,\frac{\pi}{2}): \quad 
\frac{\zeta}{\pi} \left( 1+\cos \zeta \right ) ^{\alpha} 
+ \frac{\left (\pi - \zeta \right ) ^{2\alpha +1}}{\pi (2\alpha +1) 2 ^{\alpha}}
\le \frac{ \pi  ^{2\alpha }}{ (2\alpha +1) 2 ^{\alpha}}
.
\end{align}
From our upper bound on Eq.~\eqref{eq:beta_upper_bound}, this is at most $1$ for $\alpha \in (0,0.1)$, so we obtain
\begin{align}
\sqrt{j} \ge m>\sqrt{j}/2: \quad 
& \frac{\arccos(\frac{\sqrt{j}}{m} -1)}{\pi} \left( \frac{\sqrt{j}}{m }  \right ) ^{\alpha} 
+ \frac{\left [\pi - \arccos(\frac{\sqrt{j}}{m}-1) \right ]^{2\alpha +1}}{\pi (2\alpha +1) 2 ^{\alpha}}
< 1 
\quad  \forall  \alpha \in (0,0.1)
.
\end{align}
Further, noting that $m \le \sqrt{j} $, we see that in the asymptotic $j\to \infty $ limit, the last term in Eq.~\eqref{neq:ratio.approx.mp.l}, $O({m}^{3-\alpha} j^{-2+\frac{\alpha}{2}} ) $, becomes $o(1)$, as 
\begin{align}
& {m}^{3-\alpha} j^{-2+\frac{\alpha}{2}} = \left ( \frac{m}{\sqrt{j}}\right ) ^{3-\alpha} j ^{- \frac{1}{2}}  \le j ^{- \frac{1}{2}}
.
\end{align}
Thus, we have shown that, in the asymptotic limit $j \to \infty$, the following inequality holds for every $m$ satisfying $m = \omega(1)$ and $m = O( \sqrt{j})$: 
\begin{align} 
\sum _{m ' } \mathcal{P} (m  \to m ' )  \frac{m'^{\alpha} }{m^{\alpha} } < 1, \quad  
\forall  \alpha \in (0,0.1) 
. 
\end{align}
Choosing any particular $\alpha \in (0,0.1)$ gives the result.
\end{proof}

\begin{proof}[Proof of Theorem \ref{thm:main}]
At time $0$, we have $m = j$ with probability $1$, so $\ev{[M(0)]^{\alpha}} = (\sqrt{j} + 1)^{\alpha}$, where we define $M (t) $ as a function of $m$ at time step $t$ as per Eq.~\eqref{neq:def.M.via.m}. By Lemma \ref{lem:ratio-small}, we have $\frac{\ev{[M (t+1)]^{\alpha}}}{\ev{[M  (t)]^{\alpha}}} < c$ for each $t$ with $M  (t) = \omega(1)$, so
\begin{equation}
\ev{[M(t)]^{\alpha}} < c^t \ev{[M(0)]^{\alpha}} = c^t (\sqrt{j} + 1)^{\alpha},
\end{equation}
unless there exists a $t'\le t$ such that $M(t') =O(1)$. 
Making use of the above inequality, for any desired $\eps > 0$, we can attain either $\ev{[M(t)]^{\alpha}} < \eps$ or $M(t) =O(1)$ in time
\begin{equation}
t _{\eps} = \frac{\alpha \log(\sqrt{j}+1) + \log(1/\eps)}{\log(1/c)} = O(\log j)
.
\end{equation}
In the former case, we have
\begin{equation}
\begin{aligned}
t \ge t _{\eps}: \quad 
\eps > \ev{[M(t)]^{\alpha} } &\ge \Pr[M(t)=0] \cdot 0 + \Pr[M(t) \ge 1] \cdot 1  = \Pr[M(t) \ge 1]  = 1 - \Pr[M(t) = 0],
\end{aligned}
\end{equation}
so that  $\Pr[M(t) = 0] > 1-\eps$. In the latter case, i.e., if we have $M(t) =O(1)$, from Eq.~\eqref{neq:wignerd.bessel} we conclude that any such state has a $\Theta(1)$ transition probability to reach the $m = 0$ state.
Thus in expectation repeating this $O(\log j)$-step procedure a constant number of times will yield the $m = 0$ state.
\end{proof}

\end{document}